\newtheorem{theorem}{\textbf{Theorem}}
\newtheorem{lemma}{\textbf{Lemma}}
\algnewcommand\algorithmicreturn{\textbf{return}}
\algnewcommand\RETURN{\State \algorithmicreturn}%
\begin{document}

\title{Performance Analysis of Uplink  NOMA-Relevant Strategy Under Statistical Delay QoS Constraints}

\author{Mylene Pischella,~\IEEEmembership{Senior Member,~IEEE}, Arsenia Chorti,~\IEEEmembership{Member,~IEEE} and Inbar Fijalkow,~\IEEEmembership{Senior Member,~IEEE}
\tiny\thanks{Mylene Pischella is with CNAM CEDRIC, France and is a visiting researcher at ETIS, UMR 8051 (contact: mylene.pischella@cnam.fr); Arsenia Chorti and Inbar Fijalkow are with ETIS UMR 8051 CY Cergy Paris Université, ENSEA, CNRS, France.}\normalsize
}

\maketitle

\begin{abstract}
A new multiple access (MA) strategy, referred to as non orthogonal multiple access - Relevant (NOMA-R), allows selecting NOMA when this increases \textit{all} individual rates, i.e., it is beneficial for both strong(er) and weak(er) individual users. This letter provides a performance analysis of the NOMA-R strategy in uplink networks with statistical delay constraints. Closed-form expressions of the effective capacity (EC) are provided in two-users networks, showing that the strong user always achieves a higher EC with NOMA-R. Regarding the network’s sum EC, there are distinctive gains with NOMA-R, particularly under stringent delay constraints.

\vspace{0.1cm}
\textit{Index Terms:} NOMA, effective capacity, QoS delay constraint.
\end{abstract}

\section{Introduction}
Due to the strict delay requirements of many emerging ultra reliable low latency communication (URLLC) applications in beyond fifth generation (B5G) networks, the investigation of the interplay between statistical delay quality of service (QoS) constraints and wireless propagation conditions is highly timely. In this context, employing the link layer metric of the effective capacity (EC) \cite{WuNegi2003,ECSurvey19} -- which indicates the maximum achievable rate under a target delay-outage probability threshold -- emerges as a natural choice.

In parallel, non-orthogonal multiple access (NOMA) \cite{DaiHanzoSurvery18} has been consistently shown to achieve higher sum spectral efficiencies when compared to OMA or other schemes \cite{Chorti}.  Moreover, NOMA may be required in B5G networks for very large users densities. Up to now, EC analyses in NOMA networks have focused primarily on the downlink \cite{LiuNOMAEC17,YuNOMAEC18, XiaoNOMAEC19}.  With respect to the uplink, in \cite{BelloChorti2019} it was shown that in two-user networks NOMA is more efficient than OMA at low signal to noise ratios (SNRs), whereas the opposite conclusion holds at large SNRs, due to the interference experienced by the strong user. Adaptive multiple access (MA) strategies could therefore enhance the performance; to the best of our knowledge, \cite{PischellaNOMA2019WCL} is the first attempt to propose an adaptive MA strategy, called NOMA-Relevant (NOMA-R). In NOMA-R, clusters of users employ NOMA only when it is beneficial for all of them in terms of their individual rates.  

This letter is the first EC performance analysis of adaptive MA strategies. Its contributions are the following:  (i) we evaluate the probability of using NOMA when NOMA-R is employed; (ii) we use the EC as the performance metric\footnote{We note that  \cite{PischellaNOMA2019WCL} focused on proportional fairness.}
, provide new analytic expressions of the  EC with NOMA-R and compare them to  those derived in the two-users case for NOMA and OMA in \cite{BelloChorti2019}; (iii) we prove that NOMA-R is the strategy that maximizes the EC of the strong user, whereas it  always outperforms OMA but not NOMA for the weak user; (iv) with respect to this latter aspect, 
this loss in EC for the weak user  becomes negligible under stringent delay constraints; (v) numerical results also show that this conclusion holds for a larger number of users. This letter consequently proves that NOMA-R is a very efficient strategy for delay constrained applications.

\section{Probability of using  NOMA with NOMA-R}
\subsection{System model}
Let us consider a network with $K$ users employing either OMA, NOMA or NOMA-R in the uplink. The achievable rate of user $k \in \mathcal{S}_K = \{1,...,K\}$ is denoted in the following by $R_k$,  $\tilde{R}_k$ and $\hat{R}_k$ for NOMA, OMA and NOMA-R, respectively. We assume that the independent and identically distributed (i.i.d.) fading channel coefficients in the links to the base station (BS), denoted by $h_k$, follow unit variance Rayleigh distributions. The channel gains $x_k = |h_k|^2$ are assumed ordered in decreasing order, so that $x_k \leq x_{k+1} \forall k \in \{1,...,K-1\}$, and, their distributions can be found by using the theory of order statistics
\cite{yang_alouini_2011}.  We denote by $\rho=\frac{1}{N}$ the transmit SNR with 
$N$ the additive white Gaussian noise power in each link (assumed the same for all links for simplicity). The transmit power used by user $k$ is denoted by $P_k$ 
so that its received SNR is $\rho P_k$, $k \in \mathcal{S}_K$. Let us assume that a cluster $\mathcal{S}$ of users in $ \mathcal{S}_K$ with cardinality $|\mathcal{S}|$ is chosen for NOMA.  The achievable rates (in bits/s/Hz) of the $k$th user in $\mathcal{S}$, assuming perfect successive interference cancellation (SIC) decoding, can be expressed as:
\begin{align} \label{RatesNOMA}
   R_k   =  \frac{|\mathcal{S}|}{K}\log_2\left(1+ \frac{\rho P_k x_{k}}{1+\sum_{\substack{j \in \mathcal{S}\\j<k}}\rho P_j x_{j}} \right), \forall k \in \mathcal{S}.
\end{align}
On the other hand, for the $k$th user in $\mathcal{S}_K\setminus \mathcal{S} $ the achievable rates with OMA are given as  \begin{align} \label{RatesOOMA}
  \tilde{R}_k &  = \frac{1}{K} \log_2\left(1+ \rho P_k x_k \right), \forall k \in \mathcal{S}_K\setminus \mathcal{S}.
\end{align}
The coefficients $|\mathcal{S}|/K$ and $1/K$ account for fair division of the resources between the users employing NOMA and OMA, respectively. Although in a standard NOMA network all users will employ NOMA, i.e., $\mathcal{S}$ and $\mathcal{S}_K$ coincide, in NOMA-R, $\mathcal{S}$ is a subset of $\mathcal{S}_K$. The formalization of the NOMA-R criterion stems from the requirement that $\mathcal{S}$ includes all users whose achievable rates are greater with NOMA than with OMA, i.e.,
\begin{align} \label{InitialNOMARelevantCriterionKUsers}
1+ \frac{\rho P_k x_{k}}{1+\sum_{\substack{j \in \mathcal{S}\\j<k}}\rho P_j x_{j}}  \geq  ( 1+\rho P_k x_k )^{\frac{1}{|\mathcal{S}|}}.
\end{align}

Users in $\mathcal{S}_K\setminus \mathcal{S}$ employ OMA.  Consequently, the data rate in NOMA-R is $\hat{R}_k = R_k  \ \forall k \in \mathcal{S}$ and $\hat{R}_k = \tilde{R}_k \ \forall k \notin \mathcal{S}$. In terms of implementation, the BS identifies the subsets $\mathcal{S}$ using its knowledge of the network's full channel state information (CSI). It then transmits to each user a one-bit feedback indicating whether they should use OMA or NOMA 
and either the user's index in the OMA subset or the NOMA cluster index if  several disjoint NOMA clusters are selected.  Therefore,  NOMA-R imposes at most one-bit signalling overhead with respect to OMA.

The EC in bits/s/Hz of user $k \in \mathcal{S}_K$ is defined as \cite{WuNegi2003}:
\begin{align}
    E_c^k  = -\frac{1}{\theta_k T_f B} \ln\left(\mathbb{E}[e^{-\theta_k T_f Br_k}] \right)   = \frac{1}{\beta_k} \log_2\left(\mathbb{E}[e^{\beta_k \ln(2) r_k}] \right) \nonumber
\end{align}
where $r_k$ is the achievable rate of user $k$ (equal to $R_k$, $\tilde{R}_k$, or $\hat{R}_k$ if the user employs NOMA, OMA or NOMA-R, respectively), $T_f$ is the symbol period and  $B$ is the occupied bandwidth. $\theta_k$, known as the QoS exponent\cite{WuNegi2003}, is the exponent of the exponential decay of the buffer overflow probability. Under a constant packet arrival rate assumption, the EC is defined as the maximum achievable rate such that a target delay-bound violation probability is met.  The more stringent the delay requirement, the larger the delay exponent $\theta_k$. 
To simplify the notation, we define $\beta_k = - \frac{\theta_k T_f B}{\ln(2)}$ as the negative QoS exponent. Closed form expressions of the EC when  OMA and NOMA are employed were derived in \cite{BelloChorti2019} for $K=2$ and are not repeated in the present for compactness.

\subsection{Probability of using NOMA while in  NOMA-R for $K\geq 2$}
When NOMA-R selects NOMA for $k \in \mathcal{S}$, the sum rate can easily be shown to be equal to $\frac{|\mathcal{S}|}{K}\log_2\left(1+ \sum_{k \in \mathcal{S}}\rho P_k x_{k} \right)$. Consequently, the subset 
$\mathcal{S}$ of $\mathcal{S}_K$ that maximizes the sum rate while satisfying (\ref{InitialNOMARelevantCriterionKUsers}) is selected by NOMA-R. Several disjoint clusters may also be selected if they independently verify (\ref{InitialNOMARelevantCriterionKUsers}). 

The probability of using NOMA when NOMA-R is employed, denoted by $\tau_K$ in the following, is the union of the probabilities to verify (\ref{InitialNOMARelevantCriterionKUsers}) for any subset $\mathcal{S} \subseteq \mathcal{S}_K$ with $|\mathcal{S}| \geq 2$ and its analytical derivation is very evolved. As an illustrative example, let us assume $\mathcal{S} = \{1,..., k\}$ with $k\leq K$. 
 Let  $y_k = \rho P_k x_{k}$ be  the weighted $k$th order statistics and $z_k =\sum_{j=1}^{k-1}\rho P_j x_{j}$ the weighted sum of the lowest $(k-1)$th order statistics. Then $\tau_k$ is equal to:
\begin{align} \label{ProbabilityKUsersNOMAR}
    \tau_k = Pr\left(\bigcap_{i=2:k} \left(\frac{(1+y_i) - (1+y_i)^{\frac{1}{k}}}{(1+y_i)^{\frac{1}{k}} - 1} \geq z_i \right) \right) .
\end{align}
 For the specific case $k=K$, the joint probability density function (pdf) of $(y_K, z_K)$ can be derived by using the moment generating function (MGF) of the weighted sum of the lowest order statistics, denoted as $\mathcal{M}_{z_K}$, as in \cite{yang_alouini_2011}. The pdf of $z_k$ can be obtained by using the following properties: $\mathcal{M}_{z_K} = \prod_{j=1}^{K-1} \mathcal{M}_{x_j}\left(\rho P_j x_{j} \right) $ and $\mathcal{L}^{-1} \left(  \mathcal{M}_{x_j}\left(\rho P_j x_{j} \right)\right)(t) = \frac{1}{\rho P_j} f_{x_j}\left(\frac{t}{\rho P_j} \right)$, where $\mathcal{L}^{-1} $ is the inverse Lagrange transform. Consequently, the joint pdf $f_{z_K,y_K = \bar{y}}$ can be derived from  that of $z_k$ and  $y_K$ in \cite[eq. (3.41)]{yang_alouini_2011}.
However, a closed-form expression of $Pr\left( \frac{(1+y_K) - (1+y_K)^{\frac{1}{K}}}{(1+y_K)^{\frac{1}{K}} - 1} \geq z_K  \right)$ cannot be obtained, and similarly to the conclusion in \cite[Section V.D]{KoAlouiniMRC07}, it should be calculated with a mathematical software.  Moreover when $k < K$, to the best of our knowledge, the joint pdf of $(y_K, z_K)$ is yet unknown. Consequently, 
when $K>2$, (\ref{ProbabilityKUsersNOMAR})  cannot be evaluated analytically with reasonable effort. For all these reasons, our analytical study is limited to  $K=2$ while we  provide numerical results  for $K>2$.

Finally, examining the case non i.i.d. channel coefficients, we note that the case of non-identical exponential distributions  can be treated as in \cite{Balakrishnan94}, while the case of non independent coefficients as in \cite{Inbar}. \textcolor{black}{If full CSI is not available at the BS, CSI uncertainties can be inserted in  users' distributions to derive an MA selection strategy \cite{Inbar}, and, when $K=2$,  (\ref{InitialNOMARelevantCriterionKUsers}) can be formulated as a binary hypothesis testing problem \cite{ShahD2DEC_19}.}

\subsection{Probability of using NOMA while in  NOMA-R for $K=2$}
In the following, we consider the two-users case and call user $2$ the strong user, and user $1$ the weak user. 
As $R_1  \geq \tilde{R}_1$ is always fulfilled, the NOMA-R strategy is used whenever  $R_2 \geq \tilde{R}_2$. The NOMA-R condition consequently simplifies to 
$x_2 \geq \frac{\rho^2 x_1^2 P_1^2-1}{\rho P_2}$ and $\tau_2=\tau(\rho)= Pr\left( x _2 \geq \frac{\rho^2 x_1^2 P_1^2-1}{\rho P_2} \right)$. Using the theory of order statistics, the pdf of $x_1$ is  $2 e^{-2 x_1}$, the pdf of $x_2$ is $2 e^{-x_2}(1- e^{-x_2})$ and the joint pdf of   $(x_1,x_2)$ is $2 e^{-x_1}e^{-x_2}$. Then $\tau(\rho)$ is equal to:
\begin{align} \label{TauExpression}
\tau &= \int_{x_1  = 0}^{\frac{P_2 + \sqrt{P_2^2+4 P_1^2}}{2 \rho P_1^2}} \int_{x_2 = x_1}^{+\infty} 2 e^{-x_1} e^{-x_2} dx_2 dx_1 \nonumber \\
&+ \int_{x_1 =\frac{P_2 + \sqrt{P_2^2+4 P_1^2}}{2 \rho P_1^2}}^{+\infty} \int_{x_2 = \frac{\rho^2 x_1^2 P_1^2-1}{\rho P_2}}^{+\infty} 2 e^{-x_1} e^{-x_2} dx_2 dx_1  \nonumber\\ 
&= f(\rho) + g(\rho)
\end{align}
where
\begin{align} \label{ffortau}
&f(\rho) = 1 - e^{-\frac{P_2 + \sqrt{P_2^2+4 P_1^2}}{ \rho P_1^2}}  \\ 
&g(\rho) =  \frac{\sqrt{\pi}e^{ \frac{4P_1^2+P_2^2}{4 \rho P_2 P_1^2}} \left(1- \mathrm{erf}\left(\frac{2 P_2 + \sqrt{P_2^2 + 4 P_1^2}}{2\sqrt{P_2 \rho} P_1}\right) \right) \sqrt{P_2} }{P_1 \sqrt{\rho}} 
\end{align}
and $\mathrm{erf}(x) = \frac{2}{\sqrt{\pi}} \int_0^x e^{-t^2}dt$. 
The boundary in both integrals in (\ref{TauExpression}) is due to the fact that $x_2$ should always be such that $x_2 \geq x_1$, but $\frac{\rho^2 x_1^2 P_1^2-1}{\rho P_2}$ is lower than $x_1$ if $x_1 \leq {\frac{P_2 + \sqrt{P_2^2+4 P_1^2}}{2 \rho P_1^2}}$. 
$\tau(\rho)$ is a monotonically decreasing function with respect to $\rho$.  (\ref{TauExpression}) is validated with Monte-Carlo simulations, shown in Fig. \ref{TauFigure}, assuming that $P_1+P_2=1$. 

\begin{lemma}
$\tau(\rho)$ tends to $1$ when $\rho$ tends to $0$ and  $\tau(\rho)$ tends to $0$ when   $\rho >> 1$. 
\end{lemma}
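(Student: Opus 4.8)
The plan is to establish the two one-sided limits separately, working from the closed-form decomposition $\tau(\rho)=f(\rho)+g(\rho)$ given in (\ref{TauExpression})--(\ref{ffortau}). The term $f(\rho)$ is elementary in both regimes: its exponent $-\bigl(P_2+\sqrt{P_2^2+4P_1^2}\bigr)/(\rho P_1^2)$ tends to $-\infty$ as $\rho\to 0$, so $f(\rho)\to 1$, and it tends to $0$ as $\rho\to\infty$, so $f(\rho)\to 0$. The behaviour of $g(\rho)$ is also immediate as $\rho\to\infty$: the argument of $\mathrm{erf}(\cdot)$ vanishes while the explicit prefactor $1/\sqrt{\rho}\to 0$, hence $g(\rho)\to 0$ with no indeterminacy, and combining with $f$ gives $\tau(\rho)\to 0$. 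So three of the four cases require only direct substitution.

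The one delicate case, which I expect to be the main obstacle, is $g(\rho)$ as $\rho\to 0$. Here the divergent factor $e^{(4P_1^2+P_2^2)/(4\rho P_2 P_1^2)}$ multiplies $1-\mathrm{erf}(u)$ with $u=\bigl(2P_2+\sqrt{P_2^2+4P_1^2}\bigr)/\bigl(2\sqrt{P_2\rho}\,P_1\bigr)\to +\infty$, producing an $\infty\cdot 0$ indeterminate form. I would resolve it with the complementary error function tail expansion $1-\mathrm{erf}(u)\sim e^{-u^2}/(u\sqrt{\pi})$ as $u\to\infty$. Substituting $u$ and combining exponents, a short computation shows that $(4P_1^2+P_2^2)/(4\rho P_2 P_1^2)-u^2=-\bigl(P_2+\sqrt{P_2^2+4P_1^2}\bigr)/(\rho P_1^2)$, i.e. the combined exponent is exactly the (negative) exponent appearing in $f(\rho)$. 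Thus the exponential decay dominates the $O(1)$ algebraic prefactor and $g(\rho)\to 0$, whence $\tau(\rho)=f(\rho)+g(\rho)\to 1$ as $\rho\to 0$.

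As a cleaner and fully rigorous alternative that avoids the indeterminacy altogether, I would instead argue from the probabilistic characterization. Integrating the joint pdf $2e^{-x_1}e^{-x_2}$ over the order-statistics region $x_2\geq x_1$ subject to the NOMA-R event gives $\tau(\rho)=\int_0^\infty 2e^{-x_1}e^{-\max(x_1,\,t(\rho,x_1))}\,dx_1$, where $t(\rho,x_1)=\frac{\rho x_1^2 P_1^2}{P_2}-\frac{1}{\rho P_2}$ is the threshold on $x_2$. For each fixed $x_1\geq 0$ the term $-1/(\rho P_2)$ dominates as $\rho\to 0$, so $t\to-\infty$, the maximum reduces to $x_1$, and the integrand converges to $2e^{-2x_1}$; dominated convergence (the integrand is bounded by the integrable $2e^{-x_1}$) then yields $\tau(\rho)\to\int_0^\infty 2e^{-2x_1}\,dx_1=1$. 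For each fixed $x_1>0$ the term $\rho x_1^2P_1^2/P_2$ dominates as $\rho\to\infty$, so $t\to+\infty$ and the integrand tends to $0$ pointwise (the point $x_1=0$ carries no mass); dominated convergence again gives $\tau(\rho)\to 0$. This route sidesteps the $\mathrm{erf}$ asymptotics entirely and independently confirms both limits.
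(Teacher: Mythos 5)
Your proposal is correct, and your primary route is essentially the paper's own: decompose $\tau(\rho)=f(\rho)+g(\rho)$, read off the limits of $f$ directly, and resolve the $\infty\cdot 0$ indeterminacy in $g$ as $\rho\to 0$ via the tail expansion $1-\mathrm{erf}(u)\sim e^{-u^2}/(u\sqrt{\pi})$; your exponent computation $(4P_1^2+P_2^2)/(4\rho P_2 P_1^2)-u^2=-\bigl(P_2+\sqrt{P_2^2+4P_1^2}\bigr)/(\rho P_1^2)$ is exactly what underlies the paper's claim that $g(\rho)\approx a e^{-b/\rho}$ with $a,b>0$. Two points of genuine divergence are worth recording. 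First, for $\rho\gg 1$ you observe that $g(\rho)\to 0$ by direct substitution (the erf argument and the exponent both vanish while the prefactor $1/\sqrt{\rho}\to 0$), whereas the paper invokes the small-argument expansion $\mathrm{erf}(x)\approx \tfrac{2}{\sqrt{\pi}}xe^{-x^2}$ to reach the same conclusion; your version is simpler and avoids an unnecessary approximation (and sidesteps a minor inaccuracy in the paper's resulting $1/\rho$ prefactors, which is immaterial to the limit). Second, your alternative argument -- writing $\tau(\rho)=\int_0^\infty 2e^{-x_1}e^{-\max(x_1,\,t(\rho,x_1))}dx_1$ with $t(\rho,x_1)=\rho x_1^2P_1^2/P_2-1/(\rho P_2)$ and applying dominated convergence with the uniform bound $2e^{-x_1}$ -- is a different and fully rigorous route that the paper does not take: it replaces asymptotic expansions (which the paper uses without error control) with a pointwise-limit-plus-domination argument, at the cost of not exhibiting the exponential rate $e^{-b/\rho}$ at which $\tau$ approaches its limits, which the closed-form route does reveal.
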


\begin{proof}  When $\rho \rightarrow 0$, $f(\rho) \rightarrow 1$. Moreover,  $\mathrm{erf}(x) \approx 1-e^{-x^2}/(x\sqrt{\pi})$ when $x>>1$, which implies that $g(\rho) \approx  a e^{-b/\rho}$ with $(a,b)$ two strictly positive constants. Therefore,  $g(\rho)\rightarrow 0$ when $\rho\rightarrow 0$. Furthermore, when $\rho>>1$, $f(\rho) \rightarrow 0$. As $\mathrm{erf}(x) \approx \frac{2}{\sqrt{\pi}} x e^{-x^2}$ when $x \rightarrow 0$, $g(\rho) \approx a_1 \frac{e^{a_2/\rho}}{\rho} + b_1 \frac{e^{b_2/\rho}}{\rho}$ where $(a_1,a_2,b_1,b_2)$ are constants %
and $g(\rho) \rightarrow 0$ 
\end{proof}

\begin{figure}[t]
\centering
  \includegraphics[width=2.9in]{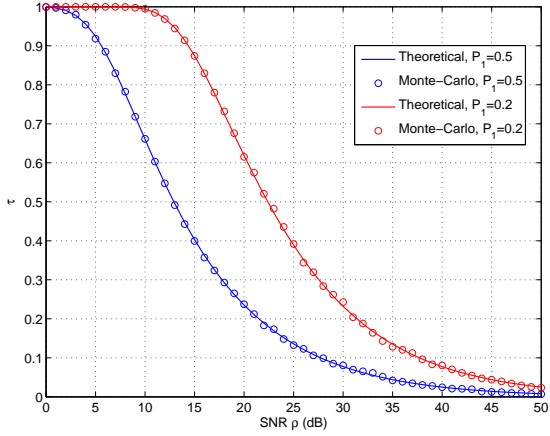}
  \vspace{-10pt}
  \caption{Validation of the closed-form expression of $\tau$ }\label{TauFigure}
\end{figure}

\section{NOMA-R Effective Rates}
\label{SectionAnalyticalExpressions}
The ECs of user $k=1,2$ when employing NOMA, OMA or NOMA-R are denoted by $E_{c,N}^k$, $E_{c,O}^k$ and $E_{c,R}^k$, respectively. 
\vspace{-0.5 cm}
\subsection{NOMA-R EC of user 1}
 We hereafter derive closed-form expressions of the EC with NOMA-R. 
As $\tau$ corresponds to the proportion of time spent in NOMA and $(1-\tau)$ is the proportion of time spent in OMA, the EC of user $k \in \{1,2\}$ when the NOMA-R strategy is employed is given as:
\begin{align} \label{NOMArelevantECGeneral}
    E_{c,R}^k & = \frac{1}{\beta_k} \log_2\left(\mathbb{E}\left[e^{\beta_k \tau R_k + \beta_k (1-\tau) \tilde{R}_k}\right]\right).
\end{align}

For user 1, the EC achieved with the NOMA-R strategy  is: 
\begin{align} \label{NOMArelevantECUser1}
    E_{c,R}^1 
   & = \frac{1}{\beta_1} \log_2\left(\mathbb{E}[ (1+\rho P_1 x_1)^{ \frac{\beta_1(\tau+1)}{2}} ]\right) \nonumber\\
   & =  \frac{1}{\beta_1} \log_2\left(\frac{2}{\rho P_1}  U\left(1,2+  \frac{\beta_1(\tau+1)}{2} , \frac{2}{\rho P_1}\right)  \right)
\end{align}
where $U(a,b,z) = \frac{1}{\Gamma(a)} \int_0^{\infty} e^{-zt}t^{a-1} (1+t)^{(b-a-1)} dt$ denotes the confluent hypergeometric function.

\begin{lemma}
 $E_{c,R}^1$ is monotonically increasing with  $\rho$.
 \end{lemma}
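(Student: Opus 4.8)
The plan is to establish the derivative inequality $\frac{d}{d\rho}E_{c,R}^1\ge 0$. First I would remove the prefactor's sign: since $\beta_1=-\theta_1 T_f B/\ln 2<0$ and $\Phi(\rho):=\mathbb{E}\big[(1+\rho P_1 x_1)^{\beta_1(\tau+1)/2}\big]>0$, the expression $E_{c,R}^1=\frac{1}{\beta_1}\log_2\Phi(\rho)$ in (\ref{NOMArelevantECUser1}) is non-decreasing in $\rho$ if and only if $\Phi(\rho)$ is non-increasing in $\rho$. Equivalently, writing $E_{c,R}^1=-\frac{1}{\theta_1 T_f B}\ln\mathbb{E}\big[e^{-\theta_1 T_f B\,\bar R_1}\big]$ with the time-shared rate $\bar R_1:=\frac{\tau+1}{2}\log_2(1+\rho P_1 x_1)$, monotonicity of $E_{c,R}^1$ follows from monotonicity of this log-exponential functional with respect to the rate $\bar R_1$.

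I would then differentiate under the integral sign, justified by smoothness of the integrand together with an integrable envelope coming from the $2e^{-2x_1}$ tail and from $\tau,\tau'$ being bounded (both readable off the closed forms in (\ref{ffortau})). Setting $c:=\theta_1 T_f B$, this gives
\[
\frac{d}{d\rho}E_{c,R}^1=\mathbb{E}_{\tilde{\mathbb P}}\!\left[\partial_\rho\bar R_1\right],\qquad \partial_\rho\bar R_1=\underbrace{\frac{\tau+1}{2}\,\frac{P_1 x_1}{(1+\rho P_1 x_1)\ln 2}}_{\text{SNR gain}\,>0}+\underbrace{\frac{\tau'(\rho)}{2}\,\log_2(1+\rho P_1 x_1)}_{\text{NOMA-usage loss}\,<0},
\]
where $\tilde{\mathbb P}$ is the exponentially tilted law with density proportional to $e^{-c\bar R_1}\,2e^{-2x_1}$, and the second term is negative because $\tau$ is decreasing in $\rho$. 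Hence the raw SNR increase pushes the EC up, while the shrinking NOMA fraction pulls it down, and the claim reduces to showing that, under $\tilde{\mathbb P}$, the first term dominates the second.

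I expect this domination to be the crux, because it is false pointwise: for large $x_1$ the logarithmic NOMA-usage-loss term overwhelms the bounded SNR-gain term, so no termwise comparison is available. The route I would take exploits the tilt itself: since $e^{-c\bar R_1}$ decays in $x_1$, the law $\tilde{\mathbb P}$ concentrates on small $x_1$, exactly where $\partial_\rho\bar R_1>0$. I would make this quantitative by (i) bounding $|\tau'(\rho)|$ explicitly from $\tau=f(\rho)+g(\rho)$, both $f$ and the $\mathrm{erf}$-type term in (\ref{ffortau}) being differentiable with controllable derivatives, and (ii) comparing $\mathbb{E}_{\tilde{\mathbb P}}\big[\tfrac{P_1 x_1}{1+\rho P_1 x_1}\big]$ with $\mathbb{E}_{\tilde{\mathbb P}}\big[\log_2(1+\rho P_1 x_1)\big]$, using that the tilt suppresses precisely the large-$x_1$ tail where the two expectations would otherwise be comparable. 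As a consistency check, the limits of $\tau$ from Lemma 1 fix the endpoint behaviour: $E_{c,R}^1\to 0$ as $\rho\to 0$, while as $\rho\to\infty$ one has $\bar R_1\to\frac12\log_2(1+\rho P_1 x_1)$, the OMA rate whose EC is clearly increasing, consistent with monotonicity throughout $(0,\infty)$.
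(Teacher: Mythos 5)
Your setup is sound up to the point where the real work begins, and then the proof stops. The differentiation under the integral sign and the resulting tilted-expectation identity
$\frac{d}{d\rho}E_{c,R}^1=\mathbb{E}_{\tilde{\mathbb{P}}}[\partial_\rho\bar R_1]$ are correct, and you correctly identify that $\partial_\rho\bar R_1$ splits into a positive SNR-gain term and a negative term proportional to $\tau'(\rho)$, with no pointwise domination (for large $x_1$ the logarithmic loss term wins). But the claim that the tilt concentrates mass where the positive term dominates is exactly the lemma in disguise, and you only outline a programme for proving it --- bound $|\tau'|$ from the closed forms, compare the two tilted expectations --- without carrying out either step. Note also that the tilt $e^{-c\bar R_1}\propto(1+\rho P_1x_1)^{\beta_1(\tau+1)/2}$ decays only polynomially in $x_1$, so "concentration on small $x_1$" is not automatic, and the required inequality $\frac{\tau+1}{2}\,\mathbb{E}_{\tilde{\mathbb{P}}}\bigl[\tfrac{P_1x_1}{(1+\rho P_1x_1)\ln 2}\bigr]\ge\frac{|\tau'|}{2}\,\mathbb{E}_{\tilde{\mathbb{P}}}[\log_2(1+\rho P_1x_1)]$ would need a genuinely quantitative estimate valid for all $\rho$, not just in the asymptotic regimes you check at the end. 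As it stands, this is a gap, not a proof.

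The paper avoids this difficulty entirely by decoupling the two competing effects instead of balancing them inside one derivative. Writing $E_{c,R}^1(\beta_1,\rho)$ as the NOMA effective capacity evaluated at the effective exponent $\beta_{1}(1+\tau(\rho))/2$, it compares $\rho_1<\rho_2$ in two monotone steps: first increase $\rho$ at fixed effective exponent (monotonicity of $E_{c,N}^1$ in $\rho$, from \cite{BelloChorti2019}), then increase the effective exponent at fixed $\rho$ (monotonicity of $E_{c,N}^1$ in $\beta$, essentially the power-mean/Lyapunov inequality), using only that $\tau$ is decreasing and $\beta_1<0$. That argument needs no derivative of $\tau$, no bound on $|\tau'|$, and no comparison of tilted expectations. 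If you want to salvage your derivative route, the cleanest fix is to import the same decoupling: show separately that the partial derivative in $\rho$ at frozen exponent is nonnegative and that the contribution of the moving exponent is nonnegative (the latter being the derivative form of the power-mean inequality), rather than trying to prove that one signed term dominates the other under $\tilde{\mathbb{P}}$.
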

 
\begin{proof}
Let us consider $\rho_1$ and $\rho_2$ such that $\rho_1 < \rho_2$. For any value of $\beta_1$, let us define: $\beta_{1,a} =\frac{\beta_1(1+\tau(\rho_1))}{2}$ and  $\beta_{1,b} =\frac{\beta_1(1+\tau(\rho_2))}{2}$. Then from \cite[eq.(11)]{BelloChorti2019} and (\ref{NOMArelevantECUser1}), $E_{c,R}^1(\beta_1,\rho_1) =  E_{c,N}^1(\beta_{1,a}, \rho_1)$ and $ E_{c,R}^1(\beta_1,\rho_2) =  E_{c,N}^1(\beta_{1,b}, \rho_2)$. 
As $\tau(\rho)$ is a decreasing function with respect to $\rho$, and $\beta$ are negative,  $\beta_{1,a}  \leq \beta_{1,b}$. Moreover, $E_{c,R}^1(\beta,\rho)$ is increasing both with respect to $\beta$ and to $\rho$ according to \cite{BelloChorti2019}. Consequently, $E_{c,N}^1(\beta_{1,a}, \rho_1)  \leq  E_{c,N}^1(\beta_{1,a}, \rho_2) \leq E_{c,N}^1(\beta_{1,b}, \rho_2)$  and:
\begin{align} \label{CclMonotonicityEcr}
     E_{c,R}^1(\beta_1,\rho_1) \leq     E_{c,R}^1(\beta_1,\rho_2)   \hspace{0.5cm}  \forall \rho_1 < \rho_2
\end{align}
\end{proof}
\subsection{NOMA-R EC of user 2}
For user 2, the NOMA-R EC is given by
\begin{equation} \label{NOMArelevantECUse2}
    E_{c,R}^2 
 \!=\! \frac{1}{\beta_2}\! \log_2\left(\!\mathbb{E}\!\left[\left(1+ \frac{\rho P_2 x_2}{1+\rho P_1 x_1} \right)^{\beta_2 \tau} \!\!\!\! ( 1+\rho P_2 x_2 )^{\frac{\beta_2 (1-\tau)}{2}}\right]\!\right) 
\end{equation}

\begin{lemma}
When $\rho$ tends to $0$, the EC with the NOMA-R strategy becomes equivalent to that of NOMA given in \cite{BelloChorti2019}. 
\end{lemma}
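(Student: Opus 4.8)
The plan is to show that each NOMA-R EC expression collapses onto the corresponding NOMA expression of \cite{BelloChorti2019} once the NOMA-usage probability saturates, the single driving fact being Lemma 1, which states $\tau(\rho)\to 1$ as $\rho\to 0$. First I would record the two NOMA ECs. Since for $K=2$ the NOMA rates are $R_1=\log_2(1+\rho P_1 x_1)$ and $R_2=\log_2\left(1+\frac{\rho P_2 x_2}{1+\rho P_1 x_1}\right)$, the EC definition gives $E_{c,N}^1=\frac{1}{\beta_1}\log_2\left(\mathbb{E}[(1+\rho P_1 x_1)^{\beta_1}]\right)$ and $E_{c,N}^2=\frac{1}{\beta_2}\log_2\left(\mathbb{E}\left[\left(1+\frac{\rho P_2 x_2}{1+\rho P_1 x_1}\right)^{\beta_2}\right]\right)$. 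These are precisely what (\ref{NOMArelevantECUser1}) and (\ref{NOMArelevantECUse2}) return when $\tau$ is set to $1$: for user $1$ the exponent $\frac{\beta_1(\tau+1)}{2}$ becomes $\beta_1$, while for user $2$ the factor $(1+\rho P_2 x_2)^{\frac{\beta_2(1-\tau)}{2}}$ becomes $1$ and the surviving exponent $\beta_2\tau$ becomes $\beta_2$. Hence, pointwise in $(x_1,x_2)$, the integrands of the NOMA-R expectations converge to those of the NOMA expectations as $\rho\to 0$.

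Next I would justify passing the limit through the expectation. Because $\beta_k<0$ and each base ($1+\rho P_1 x_1$, $1+\frac{\rho P_2 x_2}{1+\rho P_1 x_1}$, $1+\rho P_2 x_2$) is at least $1$, and because $\tau\in[0,1]$ renders every exponent nonpositive, each NOMA-R integrand lies in $[0,1]$. The constant $1$ is integrable against the probability measure, so dominated convergence applies and lets me interchange $\lim_{\rho\to 0}$ with $\mathbb{E}[\cdot]$; combined with the continuity of $z\mapsto\frac{1}{\beta_k}\log_2(z)$ away from $0$, this yields $\lim_{\rho\to 0}(E_{c,R}^k-E_{c,N}^k)=0$ for $k=1,2$.

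The delicate point, and the main obstacle, is that both ECs individually tend to $0$ as $\rho\to 0$ since all rates vanish, so the claimed equivalence must be read at leading order rather than as a trivial coincidence of limits. To make this quantitative I would expand the expectations for small $\rho$ using $\ln(1+\rho P_1 x_1)\approx\rho P_1 x_1$, obtaining $E_{c,R}^1\approx\frac{\tau+1}{2}\,\frac{\rho P_1\mathbb{E}[x_1]}{\ln 2}$ against $E_{c,N}^1\approx\frac{\rho P_1\mathbb{E}[x_1]}{\ln 2}$, so the ratio $E_{c,R}^1/E_{c,N}^1\to\frac{\tau+1}{2}\to 1$; an identical first-order expansion of (\ref{NOMArelevantECUse2}) gives $E_{c,R}^2/E_{c,N}^2\to\frac{\tau+1}{2}\to 1$ as well. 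The only bookkeeping difficulty is the joint $\rho$-dependence, with $\tau(\rho)$ entering the exponents while $\rho$ also enters the bases; but since $\tau$ is deterministic it factors cleanly out of the expectation, and Lemma 1 (whose proof in fact shows $1-\tau\to 0$ exponentially fast, hence faster than the $O(\rho)$ decay of the ECs) guarantees that this dependence is absorbed in the limit.
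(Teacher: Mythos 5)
Your proposal is correct and rests on the same driving fact as the paper's proof, namely Lemma 1's $\tau(\rho)\to 1$, but it goes noticeably further. The paper's argument is a two-line pointwise observation confined to user 2: since $\tau\to 1$, the OMA factor $(1+\rho P_2 x_2)^{\beta_2(1-\tau)/2}$ in (\ref{NOMArelevantECUse2}) tends to $1$ and the expression formally collapses onto the NOMA one; "equivalence" is read as the coincidence of the two EC formulas at $\tau=1$. You add two things the paper leaves implicit. First, the dominated-convergence justification (all integrands lie in $[0,1]$ because the exponents are nonpositive and the bases are at least $1$), which legitimizes interchanging the limit with the expectation. Second, and more substantively, you notice that the naive limit statement is vacuous—both $E_{c,R}^k$ and $E_{c,N}^k$ tend to $0$ as $\rho\to 0$—so you reinterpret the claim as asymptotic equivalence and verify via a first-order expansion that $E_{c,R}^k/E_{c,N}^k\to(1+\tau)/2\to 1$ for both users. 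This is the right reading of "equivalent" and repairs a genuine looseness in the paper's proof; your parenthetical about the exponential rate of $1-\tau\to 0$ is true but not actually needed, since $\tau\to 1$ at any rate already makes the ratio converge. The only cosmetic difference is that the paper states and proves the lemma for user 2 only (user 1's comparison is handled separately in Theorem 1), whereas you treat both users at once.
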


\begin{proof}
When $\rho \rightarrow 0$,  $\tau(\rho) \rightarrow 1$ and therefore $( 1+\rho P_2 x_2 )^{\frac{\beta_2 (1-\tau)}{2}} \rightarrow 1$. Consequently,  $E_{c,R}^2$ tends to $E_{c,N}^2$. 
\end{proof}

\begin{lemma}
When $\rho>>1$, the EC with the NOMA-R strategy becomes equivalent to that of OMA and its closed-form expression is given by: 
\begin{align} \label{NOMArelevantECUse2NearInfinity}
    E_{c,R}^2 &  \approx \frac{1}{\beta_2} \log_2\left(\Gamma \left(\frac{\beta}{2} +1 \right) (\rho P_2)^{\frac{\beta}{2}} (2-2^{-\frac{\beta}{2} }) \right).
\end{align}
\end{lemma}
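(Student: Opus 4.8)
The plan is to start from the exact NOMA-R expression (\ref{NOMArelevantECUse2}) and take the large-$\rho$ limit by invoking Lemma 1. Since Lemma 1 gives $\tau(\rho)\to 0$ as $\rho\gg 1$, the NOMA factor $\left(1+\frac{\rho P_2 x_2}{1+\rho P_1 x_1}\right)^{\beta_2\tau}$ has a vanishing exponent and tends to $1$, while the OMA factor $(1+\rho P_2 x_2)^{\frac{\beta_2(1-\tau)}{2}}$ tends to $(1+\rho P_2 x_2)^{\frac{\beta_2}{2}}$. Hence $E_{c,R}^2$ collapses onto the OMA EC of user $2$, namely $\frac{1}{\beta_2}\log_2\!\left(\mathbb{E}\!\left[(1+\rho P_2 x_2)^{\frac{\beta_2}{2}}\right]\right)$, which already establishes the equivalence with OMA asserted in the first part of the statement.

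It then remains to evaluate this expectation in closed form at high SNR. First I would isolate the $\rho$-dependence by writing $(1+\rho P_2 x_2)^{\frac{\beta_2}{2}}=(\rho P_2)^{\frac{\beta_2}{2}}\left(\frac{1}{\rho P_2}+x_2\right)^{\frac{\beta_2}{2}}$, so that the remaining factor converges to $x_2^{\frac{\beta_2}{2}}$ as $\rho\to\infty$ and the prefactor $(\rho P_2)^{\frac{\beta_2}{2}}$ carries the scaling seen in (\ref{NOMArelevantECUse2NearInfinity}). Using the strong-user pdf $2e^{-x_2}(1-e^{-x_2})=2e^{-x_2}-2e^{-2x_2}$, the leading coefficient becomes $\mathbb{E}\!\left[x_2^{\beta_2/2}\right]=2\int_0^\infty x_2^{\beta_2/2}e^{-x_2}\,dx_2-2\int_0^\infty x_2^{\beta_2/2}e^{-2x_2}\,dx_2$. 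Each term is a Gamma integral $\int_0^\infty x^{s-1}e^{-ax}\,dx=\Gamma(s)/a^{s}$ with $s=\frac{\beta_2}{2}+1$ and $a\in\{1,2\}$; collecting them gives $2\Gamma(s)-2^{1-s}\Gamma(s)=\Gamma\!\left(\frac{\beta_2}{2}+1\right)\left(2-2^{-\frac{\beta_2}{2}}\right)$, since $1-s=-\frac{\beta_2}{2}$. Substituting back recovers (\ref{NOMArelevantECUse2NearInfinity}).

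The main obstacle is rigorously justifying that these pointwise limits may be passed inside the expectation and that the Gamma integrals converge. Because $\beta_2<0$, the exponent $\frac{\beta_2}{2}$ is negative, so convergence of $\Gamma\!\left(\frac{\beta_2}{2}+1\right)$ at the origin requires $\beta_2>-2$; under this condition the family $\left(\frac{1}{\rho P_2}+x_2\right)^{\beta_2/2}$ is monotonically increasing in $\rho$ and dominated by the integrable envelope $x_2^{\beta_2/2}$, so monotone (equivalently, dominated) convergence legitimizes the interchange. A secondary point is to confirm that the discarded $\tau$-dependent terms are genuinely negligible: from the proof of Lemma 1, $\tau$ decays like $e^{-b/\rho}$, faster than any power of $1/\rho$, so the NOMA contribution and the $(1-\tau)$ correction in the exponent do not perturb the leading $(\rho P_2)^{\beta_2/2}$ asymptotics, confirming that the approximation in (\ref{NOMArelevantECUse2NearInfinity}) is exact to leading order.
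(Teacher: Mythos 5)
Your proposal follows the paper's proof exactly: use $\tau\to 0$ to drop the NOMA factor, approximate $(1+\rho P_2 x_2)^{\beta_2/2}$ by $(\rho P_2 x_2)^{\beta_2/2}$, and evaluate the resulting expectation against the pdf $2e^{-x_2}(1-e^{-x_2})$ as a pair of Gamma integrals — you merely spell out the integral the paper leaves implicit and add the (correct) integrability caveat $\beta_2>-2$. One small correction to your closing remark: as $\rho\to\infty$ the proof of Lemma~1 gives $\tau\sim c/\rho$ (the factor $e^{-b/\rho}$ tends to $1$ there, not to $0$), so the decay is only polynomial; nevertheless $\tau\ln\rho\to 0$, which is all that is needed for the discarded $\tau$-dependent terms to be negligible, so your conclusion stands.
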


\begin{proof}
When $\rho>>1$,   $\left(1+ \frac{\rho P_2 x_2}{1+\rho P_1 x_1} \right)^{\beta_2 \tau} \rightarrow 1$ because $\tau \rightarrow 0$. Then using $(1+x)^{\alpha} \approx x^{\alpha}$,  the EC of user $2$ becomes:
\begin{align} 
    E_{c,R}^2 &   \approx \frac{1}{\beta_2} \log_2\left(\mathbb{E}\left[\left( \rho P_2 x_2 \right)^{\frac{\beta_2}{2}}\right]\right) \nonumber \\
    & \approx \frac{1}{\beta_2} \log_2\left(\int_{0}^{\infty}2 \left( \rho P_2 x_2 \right)^{\frac{\beta_2}{2}} e^{-x_2}(1-e^{-x_2}) dx_2 \right).\nonumber
\end{align}
The integral's closed-form expression leads to (\ref{NOMArelevantECUse2NearInfinity}). 
\end{proof}

\begin{theorem}
The EC of user $1$ is always larger with NOMA than with NOMA-R, while OMA is the worst strategy in terms of EC.
Moreover, the EC of user $2$ is always larger with the NOMA-R strategy than with NOMA or OMA.  
\end{theorem}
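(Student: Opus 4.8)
The plan is to reduce both assertions to a single monotonicity principle and then read off each three‑way comparison from a pointwise ordering of the underlying rates. First I would record the elementary fact that, because $\beta_k<0$, the effective capacity is monotonically increasing in the instantaneous rate: if $r$ and $r'$ are rate variables with $r\le r'$ for every channel realization, then the negative sign of $\beta_k$ gives $e^{\beta_k\ln(2)r}\ge e^{\beta_k\ln(2)r'}$ pointwise, so that $\mathbb{E}[e^{\beta_k\ln(2)r}]\ge\mathbb{E}[e^{\beta_k\ln(2)r'}]$ and $\log_2$ preserves this order; the final multiplication by $\tfrac1{\beta_k}<0$ reverses it a second time, so the net map $r\mapsto E_c$ is order‑preserving, i.e. $E_c(r)\le E_c(r')$. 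This converts each claim into a statement about pointwise comparison of rates.

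For user $1$ the three rates are perfectly aligned. Since user $1$ is the weakest user it carries no interference, so $R_1=\log_2(1+\rho P_1x_1)$, $\tilde R_1=\tfrac12\log_2(1+\rho P_1x_1)=\tfrac12R_1\le R_1$, and the NOMA-R rate $\hat R_1=\tau R_1+(1-\tau)\tilde R_1$ is a convex combination with $\tau=\tau(\rho)\in[0,1]$ fixed once $\rho$ is fixed. Hence $\tilde R_1\le\hat R_1\le R_1$ for every realization, and the monotonicity principle gives $E_{c,O}^1\le E_{c,R}^1\le E_{c,N}^1$, which is the first assertion. The same conclusion is visible directly from (\ref{NOMArelevantECUser1}): the exponent $\tfrac{\beta_1(\tau+1)}2$ lies between $\tfrac{\beta_1}2$ and $\beta_1$, and $(1+\rho P_1x_1)^{s}$ is monotone in $s$.

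For user $2$ the rates are no longer pointwise comparable, which is exactly why NOMA is not uniformly better for the strong user, so the crux is to exploit the definition of the NOMA-R rule. The key observation, already established in the paper, is that for $K=2$ the criterion (\ref{InitialNOMARelevantCriterionKUsers}) reduces to $R_2\ge\tilde R_2$: NOMA is activated on precisely the event where it improves user $2$'s rate, and OMA is used on its complement. Writing $A=1+\tfrac{\rho P_2x_2}{1+\rho P_1x_1}$ and $B=(1+\rho P_2x_2)^{1/2}$, so that $R_2=\log_2A$, $\tilde R_2=\log_2B$, and $R_2\ge\tilde R_2\iff A\ge B$, this means that in every channel state the rate delivered to user $2$ is $\max(R_2,\tilde R_2)$, which dominates both its pure‑NOMA rate $R_2$ and its pure‑OMA rate $\tilde R_2$. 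Feeding these two pointwise inequalities into the monotonicity principle yields $E_{c,R}^2\ge E_{c,N}^2$ and $E_{c,R}^2\ge E_{c,O}^2$ simultaneously, which is the second assertion and the claimed optimality of NOMA-R for the strong user.

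The hard part will be the user‑$2$ step, because the EC expression (\ref{NOMArelevantECUse2}) is written with the time‑averaged exponent $\beta_2(\tau R_2+(1-\tau)\tilde R_2)$ rather than with the per‑state selection $\max(R_2,\tilde R_2)$, so I would need to justify that the comparison may legitimately be carried out state by state (and to dispose of the tie $R_2=\tilde R_2$ and of the fact that $\tau$ is deterministic at fixed $\rho$). If one insists on working from (\ref{NOMArelevantECUse2}) directly, Hölder's inequality applied to $\mathbb{E}[A^{\beta_2\tau}B^{\beta_2(1-\tau)}]$ with conjugate exponents $1/\tau$ and $1/(1-\tau)$ yields only the weighted‑average lower bound $E_{c,R}^2\ge\tau E_{c,N}^2+(1-\tau)E_{c,O}^2$, i.e. domination of the \emph{worse} of the two schemes; sharpening this to domination of \emph{both} endpoints is where the identity $\tau=\Pr(A\ge B)$ — tying the mixing weight to the very event that orders $A$ and $B$ — must be used, and this is the genuine obstacle the argument has to clear.
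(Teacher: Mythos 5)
Your proposal is correct and follows essentially the same route as the paper: a pointwise ordering of the instantaneous rates ($\tilde R_1\le\hat R_1\le R_1$ for user 1, and $\hat R_2=\max\{R_2,\tilde R_2\}\ge R_2,\tilde R_2$ for user 2) combined with the double sign reversal from $\beta_k<0$, which makes the map from rate to EC order-preserving. The tension you flag at the end between the averaged exponent in (\ref{NOMArelevantECUse2}) and the per-state selection is a genuine imprecision in the paper, but the paper's own proof resolves it exactly as you do --- by taking the NOMA-R rate of user 2 to be $\max\{R_2,\tilde R_2\}$ in each channel state, consistent with the system model's definition of $\hat R_k$ --- rather than by working from the deterministic time-share expression.
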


\begin{proof}
The NOMA-R instantaneous rate of user $1$  is equal to $\hat{R}_1 = \frac{(1+\tau)}{2} \log_2(1+\rho P_1 x_1)$, according to  (\ref{NOMArelevantECUser1}). Therefore
\begin{math} \label{ComparisonUser1InstantaneousCapacities1}
    \tilde{R}_1 \leq \hat{R}_1 \leq R_1
\end{math}
Then as $\beta_1$ is negative, 
  $ e^{\beta_1 R_1}  \leq e^{\beta_1\hat{R}_1} \leq   e^{\beta_1 \tilde{R}_1}$, and  $\mathbb{E}\left[e^{\beta_1 R_1}\right]  \leq  \mathbb{E}\left[e^{\beta_1\hat{R}_1}\right] \leq    \mathbb{E}\left[e^{\beta_1 \tilde{R}_1}\right]$, so that
\begin{align} \label{ECOrderingUser1}
    E_{c,N}^1 \geq  E_{c,R}^1 \geq E_{c,O}^1
\end{align}
The NOMA-R  rate of user $2$  is $\hat{R}_2= \max\{R_2,\tilde{R}_2\}$. Following the same steps as for user $1$, we conclude that
\begin{align} \label{OrderingEC2_third}
 E_{c,R}^2  \geq E_{c,N}^2
\text{  and }  E_{c,R}^2 \geq E_{c,O}^2
\end{align} 
\end{proof}

\textit{Remark}: $E_{c,R}^2$ asymptotically tends to  either $E_{c,N}^2$ or $E_{c,O}^2$, both of which are monotonically increasing with  $\rho$. Moreover, contrary to the NOMA strategy, the NOMA-R strategy does not lead to a saturation of the EC of the strong user because of (\ref{OrderingEC2_third}) and because $E_{c,O}^2$ increases without bound with $\rho$ \cite{BelloChorti2019}.

\section{Numerical Results \label{SectionPerformance}}
Performance comparisons with respect to EC with the NOMA-R strategy are provided for $P=[0.2, 0.8]$, $P=[0.05, 0.15,  0.8]$ and $P=[0.01, 0.04,0.15,0.8]$ when $K=2, 3$ and $4$, respectively. Unless otherwise stated, $\beta_k =  -2, \ \forall k$.   Fig. \ref{ECapacityPerUser22} validates the results of Theorem 1. 
Fig. \ref{SumEC2and3Users} 
shows that the largest sum EC is always achieved  with NOMA-R whatever the value of $K$. Moreover, the sum EC with NOMA-R coincides with that of NOMA when the SNR is lower than a minimum value $\rho_{\text{min}}$ that increases with  $K$, because constraint $(3)$ becomes less stringent when $K$ increases.
\begin{figure}[t]
\centering
  \includegraphics[width=2.9in]{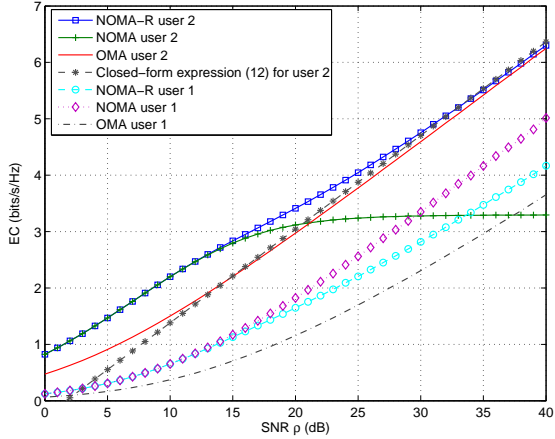}
     \vspace{-10pt}
  \caption{EC per user versus SNR $\rho$ when $K=2$}\label{ECapacityPerUser22}
\end{figure}
\begin{figure}[t]
\centering
  \includegraphics[width=2.9in]{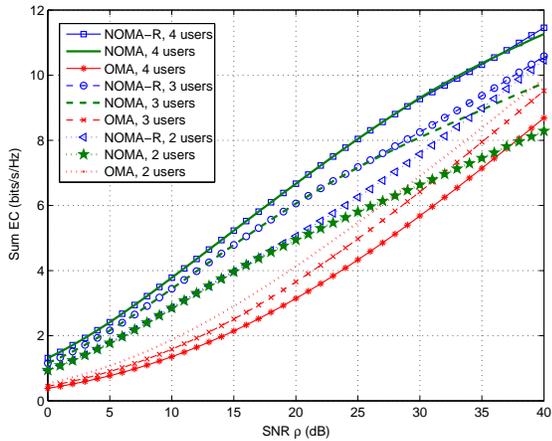}
     \vspace{-10pt}
  \caption{Sum EC versus SNR $\rho$}\label{SumEC2and3Users}
\end{figure}
Fig. \ref{SumEC23UsersDependingBeta}  shows the dependency of the EC on to $\beta_1$ when the SNR is equal to $35$ dB, $\beta_K = -2$ and  $\beta_j = \beta_1, \ \forall j<K$. The sum EC is larger with NOMA-R, except when $\beta_1$ approaches $0$.  
The NOMA-R strategy is consequently  more favorable when the target delay-bound violation probabilities are  more stringent, especially for weak users. 
\begin{figure}[t]
\centering
 \includegraphics[width=2.9in]{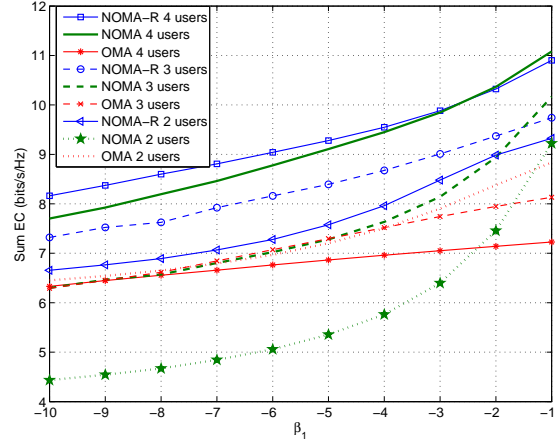}
     \vspace{-10pt}
  \caption{Sum EC versus $\beta_1$}\label{SumEC23UsersDependingBeta}
\end{figure}

\section{Conclusions}
In this letter the EC performance of an adaptive MA strategy,  NOMA-R, was studied both analytically for $K=2$ users and numerically for larger values of $K$. It was shown that NOMA-R is an advantageous strategy for delay constrained applications in B5G, e.g., URLLC, particularly as the users' delay-outage probability constraints become more stringent.
\bibliographystyle{IEEEbib}

\bibliography{NOMAEC_biblio.bib}
\end{document}